\newcommand{\bmat}[1]{\begin{bmatrix}#1\end{bmatrix}}
\newcommand{\smat}[1]{\left[\begin{smallmatrix}#1\end{smallmatrix}\right]}
\newcommand{\As}{{A^\star}}
\newcommand{\Bs}{{B^\star}}
\newcommand{\tu}[1]{\textup{#1}}
\newcommand{\real}{\mathbb{R}}
\newtheorem{assumption}{Assumption}
\newtheorem{problem}{Problem}
\newtheorem{theorem}{Theorem}
\newtheorem{proposition}{Proposition}
\newtheorem{remark}{Remark}
\newtheorem{lemma}{Lemma}
\begin{document}

\title{Controller synthesis for input-state data\\ with measurement errors}
\author{Andrea Bisoffi, Lidong Li, Claudio De Persis, Nima Monshizadeh
\thanks{A. Bisoffi ({\tt\footnotesize andrea.bisoffi@polimi.it}) is with Department of Electronics, Information, and Bioengineering, Politecnico di Milano, 20133, Italy.
C. De Persis, L. Li, N. Monshizadeh are with Engineering and Technology Institute, University of Groningen, 9747AG, The Netherlands.
This work was supported in part by the Project Digital Twin of the Research Programme TTW Perspective which is (partly) financed by the Dutch Research Council (NWO) under Grant P18-03. The work of L. Li was supported by the Chinese Scholarship Council.}
}

\maketitle

\begin{abstract}
We consider the problem of designing a state-feedback controller for a linear system, based only on noisy input-state data.
We focus on input-state data corrupted by measurement errors, which, albeit less investigated, are as relevant as process disturbances in applications.
For energy and instantaneous bounds on these measurement errors, we derive linear matrix inequalities for controller design where the one for the energy bound is equivalent to robust stabilization of all systems consistent with the noisy data points via a common Lyapunov function.
\end{abstract}

\begin{IEEEkeywords}
Data-driven control, uncertain systems, measurement errors, robust control, linear matrix inequalities
\end{IEEEkeywords}

\section{Introduction}
\label{sec:intro}

We would like to design a state feedback controller for a discrete-time linear time-invariant (LTI) system without knowing the parameter matrices $(A,B)$ of its state equation, but with only input-state measurements.
When such measurements are noise-free and enjoy persistence of excitation, using them for this goal is not so interesting since these measurements identify $(A, B)$ exactly and the data-based design boils down to a model-based one.
We thus focus on noisy measurements.
In practical settings, it is reasonable and desirable 
that the noise in the measurements is bounded in some sense.
The data generation mechanism and the noise bound yield a set of parameter matrices consistent with data, as in set membership identification \cite{fogel1979system}.
Since the actual system is indistinguishable from all others in the set, our goal is to asymptotically stabilize all systems consistent with data.
We share this approach with many recent works on (direct) data driven control  \cite{depersis2020formulas,berberich2021combining,coulson2021distributionally,vanwaarde2022Slemma,celi2023closed}.

Within this framework, however, the noise can enter the system in different ways when data are generated and different bounds on the noise can be postulated.
As for the first aspect, a large part of the recent literature on data-driven control has considered a process disturbance, which captures unmodeled dynamics in the state equation \cite{berberich2021combining,vanwaarde2022Slemma,bisoffi2022petersen}.
Still, less attention has been devoted to disturbances corrupting the input, which capture actuator inaccuracies, or corrupting the state or the output, which capture sensor inaccuracies.
The close link with actuator\slash sensor inaccuracies makes it relevant to consider disturbances corrupting input, state or output, which are known as errors-in-variables in system identification \cite{soderstrom2003errors,cerone2011set,soderstrom2018errors}. In a static setting, they correspond to distinguishing between manifest and latent variables. We refer to \cite[\S 1.1]{soderstrom2018errors} for further motivation of errors-in-variables, which we term measurement errors here.
Measurement errors are commonly postulated to have an energy bound or an instantaneous bound \cite[\S I]{bertsekas1971recursive}:
the former considers the whole sequence of the errors acting during data collection, see \eqref{energy_bound} below, whereas the latter considers each of such error instances, see \eqref{inst_bound} below.
The treatment of instantaneous bounds is relevant for analogous advantages to those evidenced in \cite{bisoffi2021tradeoffs} for process disturbances.

Based on this discussion, the case of measurement errors with instantaneous bounds is particularly relevant for control applications, since such bounds on input or state errors can be inferred based on actuator or sensor characteristics, pointwise in time, whereas it is less trivial to infer tight energy bounds on process disturbances.
We provide linear matrix inequalities (LMIs) to design a state-feedback controller for the setting of measurement errors with energy and instantaneous bounds.

\emph{Related literature.}
In \cite{bisoffi2021tradeoffs} and \cite{bisoffi2022petersen}, data is generated by an LTI system affected by a process disturbance. Here, we depart from that setting and consider instead input-state data affected by measurement errors, in an error-in-variables setting.
In \cite{depersis2020formulas}, sufficient conditions for controller design are given when the measurement error on the state satisfies an energy bound, whereas we provide necessary and sufficient conditions here.
Measurement errors and a process disturbance, which satisfy an instantaneous bound (in $\infty$-norm), are considered in \cite{miller2022errorinvariables}. To handle bilinearity in the set of system parameters, \cite{miller2022errorinvariables} formulates the controller design as a polynomial optimization problem, which is approximated by a converging sequence of sum-of-squares programs.
Here, Proposition~\ref{prop:matrix_elim_lemma} enables controller design by solving a single LMI.
In \cite{abuabiah2023noniterative}, a controller design for multi-input-multi-output linear systems from input-output data is proposed; this design imposes matching a reference model via a nonconvex program \cite[\S III.D]{abuabiah2023noniterative} and stability of the so-designed controller is checked a posteriori \cite[\S V]{abuabiah2023noniterative}.
Finally, the results in this paper complement those in~\cite{otherpaperarxiv}.
While \cite{otherpaperarxiv} addresses the case of output feedback, with input and output data
corrupted by measurement errors, we consider state feedback here.
Focusing on the special case of state feedback, we give stronger conditions than those in \cite{otherpaperarxiv}, namely a necessary and sufficient condition in Theorem~\ref{thm:design} instead of the sufficient conditions in \cite[Thm.~1]{otherpaperarxiv}.
Besides, \cite{otherpaperarxiv} treats only energy bounds whereas we treat also instantaneous ones here.

\emph{Contribution.}
For measurement errors with energy and instantaneous bounds, we obtain two corresponding LMIs in Theorems~\ref{thm:design} and \ref{thm:design_inst}.
These LMIs depend only on the collected data and the postulated bounds, and enable the design of a state-feedback controller.
Such controller is guaranteed to asymptotically stabilize all systems consistent with data, and the actual one in particular.
Importantly, Theorem~\ref{thm:design} takes the form of a necessary and sufficient condition.
We also provide the independently relevant Proposition~\ref{prop:matrix_elim_lemma}, which can be interpreted as a matrix elimination result.

\section{Preliminaries}

\subsection{Notation}\label{sec:notation}

The identity matrix of size $n$ and the zero matrix of size $m \times n$ are $I_n$ and $0_{m \times n}$: the indices are dropped when no confusion arises.
The largest eigenvalue of a symmetric matrix $M$ is $\lambda_{\max}(M)$.
The largest and smallest singular values of a matrix $M$ are $\sigma_{\max}(M)$ and $\sigma_{\min}(M)$.
The 2-norm of a vector $v$ is $|v|$.
The induced 2-norm of a matrix $M$ is $\|M\|$ and is equivalent to $\sigma_{\max}(M)$.
For matrices $M$, $N$ and $O$ of compatible dimensions, we abbreviate $MNO(MN)^{\top}$ to $MN \cdot O [\star]^{\top}$, where the dot in the second expression clarifies unambiguously that $MN$ is the term to be transposed.
For a symmetric matrix $\left[\begin{smallmatrix} M & N \\ N^{\top} & O \end{smallmatrix}\right]$, we may use the shorthand writing $\left[\begin{smallmatrix} M & N \\ \star & O \end{smallmatrix}\right]$.
For a positive semidefinite matrix $M$, $M^{1/2}$ is its unique positive semidefinite square root \cite[p.~440]{horn2013matrix}.

\subsection{Matrix elimination result}

The next result is key for the sequel.

\begin{proposition}
\label{prop:matrix_elim_lemma}
Consider matrices $E \in \real^{n_1 \times n_2}$, $F\in\real^{n_1 \times n_3}$, $G \in \real^{n_3 \times n_3}$ with $G = G^\top \succeq 0$. Then,
\begin{subequations}
\begin{align}
\label{matrix_elim_lemma_ineq}
E E^\top \preceq F G F^\top
\end{align}
if and only if there exists $D \in \real^{n_3 \times n_2}$ such that
\begin{align}
\label{matrix_elim_lemma_eq_ineq}
E = F D,\quad D D^\top \preceq G.
\end{align}
\end{subequations}
\end{proposition}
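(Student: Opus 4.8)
The plan is to treat the two implications separately, with the reverse one being routine and the forward one carrying all the difficulty.

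For sufficiency ($\Leftarrow$), suppose $E = FD$ with $DD^\top \preceq G$. Then $EE^\top = FDD^\top F^\top$, and multiplying the inequality $G - DD^\top \succeq 0$ on the left by $F$ and on the right by $F^\top$ preserves positive semidefiniteness, giving $FGF^\top - FDD^\top F^\top = FGF^\top - EE^\top \succeq 0$. This is immediate and needs nothing beyond the congruence rule $M \succeq 0 \Rightarrow FMF^\top \succeq 0$.

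For necessity ($\Rightarrow$), I would first absorb $G$ into a single factor. Since $G = G^\top \succeq 0$, write $G = G^{1/2} G^{1/2}$ and set $H := F G^{1/2}$, so that the hypothesis \eqref{matrix_elim_lemma_ineq} becomes $EE^\top \preceq HH^\top$. The target then reduces to finding a \emph{contraction} $C \in \real^{n_3 \times n_2}$, i.e.\ with $CC^\top \preceq I$, such that $E = HC$: once such a $C$ is available, $D := G^{1/2} C$ satisfies both requirements in \eqref{matrix_elim_lemma_eq_ineq}, since $FD = F G^{1/2} C = HC = E$ and $DD^\top = G^{1/2} C C^\top G^{1/2} \preceq G^{1/2} G^{1/2} = G$.

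It remains to build the contraction $C$ from $EE^\top \preceq HH^\top$; this is the crux, and is exactly a matrix form of Douglas's range-inclusion lemma. I would construct it explicitly as $C := H^{+} E$, with $H^{+}$ the Moore--Penrose pseudoinverse of $H$, and verify the two needed properties. First, $EE^\top \preceq HH^\top$ forces a range inclusion: if $H^\top v = 0$ then $0 \le v^\top EE^\top v \le v^\top HH^\top v = 0$, whence $E^\top v = 0$; thus every vector orthogonal to the columns of $H$ is orthogonal to the columns of $E$, i.e.\ the column space of $E$ lies in that of $H$. This guarantees $HH^{+} E = E$, so $HC = E$. Second, monotonicity of congruence gives $CC^\top = H^{+} EE^\top (H^{+})^\top \preceq H^{+} HH^\top (H^{+})^\top$, and a one-line singular-value-decomposition computation identifies the right-hand side as the orthogonal projector onto the column space of $H^\top$, which is $\preceq I$; hence $CC^\top \preceq I$.

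The main obstacle is precisely this contractiveness bound $CC^\top \preceq I$ in the necessity direction: the difficulty is that $H$ is in general neither square nor full rank, so one cannot simply invert it, and the pseudoinverse must be handled so that the products $H^{+}H$ and $HH^{+}$ collapse to orthogonal projectors. The range-inclusion step is what legitimizes writing $E = HC$ in the first place, and the SVD reduction is what turns $H^{+}HH^\top(H^{+})^\top$ into a projector bounded by $I$. Everything else --- the factorization of $G$, the back-substitution $D = G^{1/2}C$, and the easy direction --- is bookkeeping.
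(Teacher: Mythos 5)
Your proof is correct, and it takes a genuinely different route from the paper's. The paper proves necessity by eigendecomposing $G = U_1 \Lambda_1 U_1^\top$, handling $G=0$ and $FU_1=0$ as separate cases, bringing $FU_1$ to reduced row echelon form via a nonsingular $V$ so that $VFU_1$ stacks a full-row-rank block $\hat F_1$ over a zero block, reading off from the partitioned inequality that the corresponding lower block of $VE$ vanishes and that $\hat E_1 \hat E_1^\top \preceq \hat F_1 \Lambda_1 \hat F_1^\top$, and then assembling $D$ from the explicit right inverse $\Lambda_1 \hat F_1^\top (\hat F_1 \Lambda_1 \hat F_1^\top)^{-1}$, with a hands-on projection argument to certify the contractiveness bound. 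You instead absorb $G$ into $H := F G^{1/2}$ and run the matrix form of Douglas's range-inclusion lemma: $EE^\top \preceq HH^\top$ forces $\ker(H^\top) \subseteq \ker(E^\top)$, hence $\mathrm{range}(E) \subseteq \mathrm{range}(H)$ and $HH^+E = E$, while the congruence $CC^\top \preceq H^+ H H^\top (H^+)^\top = (H^+H)(H^+H)^\top = H^+H \preceq I$ follows because $H^+H$ is the orthogonal projector onto $\mathrm{range}(H^\top)$; then $D := G^{1/2}H^+E$ satisfies both conditions. The two arguments hinge on the same pair of mechanisms --- a range-inclusion step that legitimizes solving $E = FD$ exactly, and a projector bound that certifies $DD^\top \preceq G$ --- but your pseudoinverse formulation needs no case splits (degenerate $G$, zero or rank-deficient $H$ are all handled uniformly) and yields the clean closed-form certificate $D = G^{1/2}(FG^{1/2})^{+}E$, at the price of invoking Moore--Penrose calculus ($HH^+$ and $H^+H$ as orthogonal projectors, $(H^+)^\top = (H^\top)^+$); the paper's construction is more elementary, using only row reduction and an explicitly constructed right inverse, which is arguably better suited to a self-contained control-journal presentation but is longer and less transparent about the underlying principle.
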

\begin{proof}
($\Longleftarrow$) From~\eqref{matrix_elim_lemma_eq_ineq}, $E E^\top = F D D^\top F^\top \preceq F G F^\top$. \newline
($\Longrightarrow$) We need to show that if \eqref{matrix_elim_lemma_ineq} holds, then there exists $D$ such that \eqref{matrix_elim_lemma_eq_ineq} holds.
If $G=0$, it must be $E=0$ and $D=0$ yields \eqref{matrix_elim_lemma_eq_ineq}.
We then consider $G \neq 0$ in the rest of the proof.
For $G \neq 0$, there exist a real orthogonal matrix $U$ (i.e., $U U^{\top} = U^{\top} U = I$) and a diagonal positive definite $\Lambda_1$ such that an eigendecomposition of $G$ is
\begin{align}
\label{eigendecomp_G}
\!\!G = U \smat{\Lambda_1 & 0\\ 0 & 0} U^\top = \bmat{U_1~U_2} \smat{\Lambda_1 & 0\\ 0 & 0} \smat{U_1^\top\\ U_2^\top}= U_1 \Lambda_1 U_1^\top, 
\end{align}
which becomes $G = U \Lambda_1 U^\top$ for $G \succ 0$.
By the eigendecomposition of $G$, \eqref{matrix_elim_lemma_ineq} is equivalent to
\begin{equation}
\label{EEtop<=FU1Lambda1.top}
    E E^{\top} \preceq F U_1 \Lambda_1 U_1^{\top} F^{\top}.
\end{equation}
If $F U_1 = 0$, it must be from~\eqref{EEtop<=FU1Lambda1.top} that $E = 0$; any $D  = U_1 D_1$ with $D_1 D_1^\top \preceq \Lambda_1$ satisfies $E = 0 = F U_1 D_1 = F D$ and $D D^\top = U_1 D_1 D_1^\top U_1^\top \preceq U_1 \Lambda_1 U_1^\top = G$, which amounts to  \eqref{matrix_elim_lemma_eq_ineq}.
Otherwise, if $F U_1 \neq 0$, we can define the next quantities:
\begin{itemize}[noitemsep,nosep,left=0pt]
\item a nonsingular matrix $V$ such that
\begin{align}
V F U_1 = \smat{ \hat F_1 \\ 0} \label{VFU1=}
\end{align}
with $\hat F_1$ full row rank; $V$ can be obtained as the reduced row echelon form \cite[p.~11-12]{horn2013matrix};
\item $\smat{\hat E_1 \\ \hat E_2} := VE$ where $\hat E_1$ has as many rows as $\hat F_1$;
\item $ \hat F_1^{\tu{R}} := \Lambda_1 \hat F_1^\top (\hat F_1 \Lambda_1 \hat F_1^\top)^{-1}$ where $\hat F_1 \Lambda_1 \hat F_1^\top \succ 0$ (it is thus invertible) since $\hat F_1$ has full row rank and $\Lambda_1 \succ 0$;
\item $D_1 := \hat F_1^{\tu{R}} \hat E_1$.
\end{itemize}
We now show that $D =U_1 D_1$ satisfies \eqref{matrix_elim_lemma_eq_ineq}.
Since $V$ is nonsingular, \eqref{EEtop<=FU1Lambda1.top} holds if and only if
\begin{align}
& V E E^\top V^\top \preceq V F U_1 \Lambda_1 U_1^\top F^\top V^\top \notag \\
& \overset{\eqref{VFU1=}}{\iff} 
\smat{\hat E_1\\ \hat E_2} \smat{\hat E_1\\ \hat E_2}^\top \preceq \smat{\hat F_1\\ 0} \Lambda_1 \smat{\hat F_1\\ 0}^\top \notag \\
& \iff \smat{\hat E_1 \hat E_1^\top & \hat E_1 \hat E_2^\top\\ \hat E_2 \hat E_1^\top & \hat E_2 \hat E_2^\top} \preceq \smat{\hat F_1 \Lambda_1 \hat F_1^\top & 0 \\ 0 & 0} \notag \\
& \iff \hat E_1 \hat E_1^\top \preceq \hat F_1 \Lambda_1 \hat F_1^\top \text{ and } \hat E_2 = 0. \label{E1E1top}
\end{align}
Note that $\hat F_1^{\tu{R}}$ is a right inverse of $\hat F_1$ because $\hat F_1 \cdot \hat F_1^{\tu{R}} = \hat F_1 \cdot \Lambda_1 \hat F_1^\top (\hat F_1 \Lambda_1 \hat F_1^\top)^{-1} = I$.
Hence, $D_1$ satisfies
\begin{align}
& \hat E_1 = \hat F_1 \hat F_1^{\tu{R}} \hat E_1 = \hat F_1 D_1 \label{hatE1=}\\
& D_1 D_1^\top = \hat F_1^{\tu{R}} \hat E_1 \hat E_1^\top (\hat F_1^{\tu{R}})^\top \overset{\eqref{E1E1top}}{\preceq} \hat F_1^{\tu{R}} \hat F_1 \Lambda_1 \hat F_1^\top (\hat F_1^{\tu{R}})^\top \notag \\
& = \Lambda_1^{1/2} \Big(\Lambda_1^{-1/2} \hat F_1^{\tu{R}} \hat F_1 \Lambda_1^{1/2}\Big)\Big(\Lambda_1^{1/2} \hat F_1^\top (\hat F_1^{\tu{R}})^\top \Lambda_1^{-1/2}\Big) \Lambda_1^{1/2} \notag \\
& =: \Lambda_1^{1/2} W W^\top \Lambda_1^{1/2}.\label{D1D1top}
\end{align}
The so-defined $W$ is symmetric (i.e., $W = W^\top$ by the definition of $\hat F_1^{\tu{R}}$), is a projection (i.e., $W^2=W$ \cite[p.~38]{horn2013matrix}), and thus each of its eigenvalues is 0 or 1 \cite[1.1.P5]{horn2013matrix} so that
\begin{align*}
\sigma_{\max}(W) & \! =\! \sqrt{\lambda_{\max}(W W^{\top})} \! =\! \sqrt{\lambda_{\max}(W W)} \! =\! \lambda_{\max}(W) \! =\! 1
\end{align*}
and then $W W^\top \preceq I$.
We can then conclude from~\eqref{D1D1top} that
\begin{align}
D_1 D_1^\top \preceq \Lambda_1^{1/2} W W^\top \Lambda_1^{1/2} \preceq \Lambda_1. \label{D1D1top_reform}
\end{align}
Therefore,
\begin{align}
& V F D = V F U_1 D_1 \overset{\eqref{VFU1=}}{=} \smat{\hat F_1\\ 0} D_1 = \smat{\hat  F_1 D_1\\ 0} \overset{\eqref{hatE1=}}{=} \smat{\hat E_1\\ 0}  \notag \\  
& \overset{\eqref{E1E1top}}{=} \smat{\hat E_1\\ \hat E_2} = V E \iff  F D = E , \label{FD=E}
\end{align}
by $V$ nonsingular, and
\begin{align}
D D^\top = U_1 D_1 D_1^\top U_1^\top \overset{\eqref{D1D1top_reform}}{\preceq} U_1 \Lambda_1 U_1^\top \overset{\eqref{eigendecomp_G}}{=} G. \label{DDtop}
\end{align}
\eqref{FD=E} and \eqref{DDtop} correspond to \eqref{matrix_elim_lemma_eq_ineq}.
\end{proof}

Because going from \eqref{matrix_elim_lemma_eq_ineq} to \eqref{matrix_elim_lemma_ineq} dispenses with matrix $D$, Proposition~\ref{prop:matrix_elim_lemma} can be interpreted as a matrix elimination result.

\section{Problem formulation}
\label{sec:probl_form}

Consider the discrete-time linear time-invariant system
\begin{align}
x^+ & = \As x + \Bs u \label{sys_x_star}
\end{align}
with state $x \in \real^n$ and input $u \in \real^m$.
The matrices $\As$ and $\Bs$ are unknown to us and, instead of their knowledge, we rely on collecting input-state data with measurement errors to design a controller for~\eqref{sys_x_star}, as we explain below.

Input-state data are collected by performing an experiment on~\eqref{sys_x_star}.
Consider
\begin{align}
\label{u_m,y_m}
u^{\tu{m}} := u + e_u \text{ and } x^{\tu{m}} := x + e_x
\end{align}
where the measured input $u^{\tu{m}}$ differs from the actual input $u$ of~\eqref{sys_x_star} by an unknown error $e_u$ and the measured state $x^{\tu{m}}$ differs from the actual state $x$ of~\eqref{sys_x_star} by an unknown error $e_x$.
The data collection experiment is depicted in Figure~\ref{fig:data_collection_exp_state} and is then as follows: for $k=0, \dots, T-1$, apply the signal $u^{\tu{m}}(k)$; along with error $e_u(k)$, this results in (unknown) input $u(k) = u^{\tu{m}}(k) - e_u(k)$ and (unknown) state $x(k)$ for some initial condition $x(0)$; measure the signal $x^{\tu{m}}(k) = x(k) + e_x(k)$.
The available data, on which our control design is based, are then 
$\{ u^{\tu{m}}(k)\}_{k=0}^{T-1}$, $\{x^{\tu{m}}(k) \}_{k=0}^T$.
Note that we consider a measurement error on state $x$, unlike~\cite{bisoffi2021tradeoffs,bisoffi2022petersen} where a process disturbance $d$ on the dynamics was considered and the data generation mechanism was $x^+ = A^\star x + B^\star u + d$.

\begin{figure}
\centerline{\includegraphics[scale=0.8]{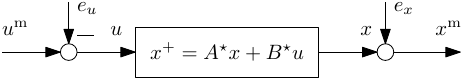}}
\caption{Scheme of data collection experiment for input-state measurements.}
\label{fig:data_collection_exp_state}
\end{figure}

The collected data satisfy, for $k=0, \dots, T-1$,
\begin{subequations}
\label{exp_data_vec}
\begin{align}
& x^{\tu{m}}(k+1) = \As
x^{\tu{m}}(k)
+ \Bs
u^{\tu{m}}(k) \notag \\
& \hspace*{19mm} + e_x(k+1) - \As
e_x(k)
- \Bs 
e_u(k) \\
&  = 
[\As \ \Bs] \smat{x^{\tu{m}}(k)\\ u^{\tu{m}}(k)} + [I \ - \! \As \ - \! \Bs]\epsilon(k)
\end{align}
with 
\begin{align*}
\epsilon(k) := \smat{e_x(k+1)\\ e_x(k)\\ e_u(k)}.
\end{align*}
\end{subequations}
For the sequel, the $T$ equalities in~\eqref{exp_data_vec} are equivalent to
\begin{align}
X_1^{\tu{m}}
& = [\As \ \Bs] \smat{X_0^{\tu{m}}\\ U_0^{\tu{m}}} + [I \ - \! \As \ - \! \Bs] \, E_{10} \label{exp_data_mat}
\end{align}
with definitions
\begin{subequations}
\begin{align}
X_1^{\tu{m}} & := \bmat{x^{\tu{m}}(1) & x^{\tu{m}}(2) & \dots & x^{\tu{m}}(T)}\\
X_0^{\tu{m}} & := \bmat{x^{\tu{m}}(0) & x^{\tu{m}}(1) & \dots & x^{\tu{m}}(T-1)} \\
U_0^{\tu{m}} & := \bmat{u^{\tu{m}}(0) & u^{\tu{m}}(1) & \dots & u^{\tu{m}}(T-1)} \\
E_{10}  & := \bmat{\epsilon(0)\hspace*{8.5pt} & \epsilon(1)\hspace*{7.2pt} & \dots & \epsilon(T-1)}.  \label{E10}
\end{align}
\end{subequations}
The quantities $X_1^{\tu{m}}$, $X_0^{\tu{m}}$ and $U_0^{\tu{m}}$ are obtained from available data $\{ u^{\tu{m}}(k)\}_{k=0}^{T-1}$, $\{x^{\tu{m}}(k) \}_{k=0}^T$
whereas $E_{10}$ is unknown.

As prior information, in addition to~\eqref{exp_data_mat}, we consider, in parallel, two types of bounds that the error $\epsilon$ in~\eqref{exp_data_vec} can satisfy.
The first bound is an energy bound: this assumes that the ``energy'' of the whole sequence of vectors $\epsilon(0)$, \dots, $\epsilon(T-1)$ is bounded by some matrix $\Theta = \Theta^\top \succeq 0$ as
\begin{align}
\label{energy_bound}
E_{10} E_{10}^\top
& =
\sum_{k=0}^{T-1}
\epsilon(k)
\epsilon(k)^\top
\preceq 
\Theta.
\end{align}
In other words, the prior information is that any error sequence acting during data collection, in particular the actual $E_{10} = [\epsilon(0) \ \epsilon(1) \ \dots \ \epsilon(T-1)]$,
belongs to the set
\begin{equation}
\label{setE}
\mathcal{E}_{\tu{e}} :=  \{ E  \in  \real^{ (2 n + m) \times T} : E E^\top \preceq \Theta \}.
\end{equation}
The second, alternative, bound is an instantaneous bound: this assumes that each vector $\epsilon(0)$, \dots, $\epsilon(T-1)$ is bounded in norm by some scalar $\theta \ge 0$ as
\begin{align}
\label{inst_bound}
| \epsilon(k) |^2 \le \theta, \quad k = 0, \dots, T-1.
\end{align}
In other words, the prior information is that any error vector, in particular the actual $\epsilon(0)$, \dots, $\epsilon(T-1)$, belongs to the set
\begin{equation}
\label{setE_inst}
\mathcal{E}_{\tu{i}} :=  \{ \varepsilon  \in  \real^{ 2 n + m} : |\varepsilon|^2 \le \theta \}.
\end{equation}
We note that imposing $\Theta = 0$ or $\theta=0$ yields the setting of noise-free data as an immediate special case.

Although the actual parameters $[\As \ \Bs]$ from~\eqref{sys_x_star} are unknown, each of the previous two bounds allows us to characterize the set of parameters $[A \ B]$ consistent with \eqref{setE} and data in~\eqref{exp_data_mat} or with \eqref{setE_inst} and data in~\eqref{exp_data_vec}.
For the energy bound in~\eqref{setE}, the set is
\begin{align}\label{setC}
& \mathcal{C}_{\tu{e}} := \{ [A \ B] \in \real^{n \times (n+m)} \colon   \\
& \quad X_1^{\tu{m}}  = [A \ B] \smat{X_0^{\tu{m}}\\ U_0^{\tu{m}}} + [I \ - \! A \ - \! B] \, E, \ E \in \mathcal{E}_{\tu{e}}  \},  \notag
\end{align}
cf.~\eqref{exp_data_mat}.
For the instantaneous bound in~\eqref{setE_inst}, the set is
\begin{subequations}\label{setC_instwithABC}
\begin{equation}
\label{setC_cap}
\mathcal{C}_{\tu{i}} := \bigcap_{k=0}^{T-1} \mathcal{C}_{\tu{i}}^k
\end{equation}
where the set $\mathcal{C}_{\tu{i}}^k$ of parameters consistent with the data point at $k=0$, \dots, $T-1$ is defined as
\begin{align}
& \mathcal{C}_{\tu{i}}^k := \{ [A \ B] \in \real^{n \times (n+m)} \colon \label{setC_inst} \\
& \,\,
x^{\tu{m}}(k+1) = 
[A \ B] \smat{x^{\tu{m}}(k)\\ u^{\tu{m}}(k)} + [I \ - \! A \ - \! B] \, \varepsilon, \ \varepsilon \in \mathcal{E}_{\tu{i}} \}, \notag
\end{align} 
\end{subequations}
cf.~\eqref{exp_data_vec}.
We emphasize that $[\As \ \Bs] \in \mathcal{C}_{\tu{e}}$ since $E_{10} \in \mathcal{E}_{\tu{e}}$; $[\As \ \Bs] \in \mathcal{C}_{\tu{i}}$ since $\epsilon(0) \in \mathcal{E}_{\tu{i}}$, \dots, $\epsilon(T-1) \in \mathcal{E}_{\tu{i}}$.
We make the next remark on the sets $\mathcal{C}_{\tu{e}}$ and $\mathcal{C}_{\tu{i}}$.
\begin{remark}
\label{rem:additional_structure}
In $\mathcal{C}_{\tu{e}}$ or $\mathcal{C}_{\tu{i}}$, we do not rely on the additional structure data in~\eqref{exp_data_vec} possess (for any $k$, the first component of $\epsilon(k)$ is equal to the second component of $\epsilon(k+1)$) because the same treatment could be given for data points from multiple experiments and, in that case, such additional structure may not be present.$\hfill\triangleleft$
\end{remark}

Our goal is to control~\eqref{sys_x_star} and make the origin asymptotically stable by using the feedback law
\begin{equation}
\label{u=Kx}
u = K x
\end{equation}
for some matrix $K$ to be designed.
In principle, if $[\As \ \Bs]$ were known in~\eqref{sys_x_star}, we would like to render $\As + \Bs K$ Schur.
In lieu of the knowledge of $[\As \ \Bs]$, we need to exploit the information available from data and embedded in the sets $\mathcal{C}_{\tu{e}}$ or $\mathcal{C}_{\tu{i}}$.
Thus, we set out to design $K$ such that  $A+B K$ is certified to be Schur for all $[A \ B] \in \mathcal{C}_{\tu{e}}$ or for all $[A \ B] \in \mathcal{C}_{\tu{i}}$ by a common Lyapunov function, as in quadratic stabilization \cite{barmish1985necessary}.
This is respectively equivalent to the robust control problems
\begin{subequations}\label{rob_contr_probl}
\begin{align}
& \text{find} \  K, P = P^\top \succ 0 \\
& \text{s.t.} \ \ (A+B K) P (A+B K)^\top \! - \! P \prec 0 \ \ \forall [A \ B] \in \mathcal{C}_{\tu{e}} \label{rob_contr_probl:ineq}
\end{align}
\end{subequations}
or
\begin{subequations}\label{rob_contr_probl_inst}
\begin{align}
& \text{find} \  K, P = P^\top \succ 0 \\
& \text{s.t.} \ \  (A+B K) P (A+B K)^\top \! - \! P \prec 0 \ \ \forall [A \ B] \in \mathcal{C}_{\tu{i}}. \label{rob_contr_probl_inst:ineq}
\end{align}
\end{subequations}

To summarize, next are our problem statements.
\begin{problem}
\label{probl}
With collected data $\{ u^{\tu{m}}(k)\}_{k=0}^{T-1}$, $\{x^{\tu{m}}(k) \}_{k=0}^T$
and with the error sequence satisfying the energy bound $\mathcal{E}_{\tu{e}}$, design a matrix $K$ to solve \eqref{rob_contr_probl} and so ensure that the origin of the feedback interconnection of~\eqref{sys_x_star} and \eqref{u=Kx} is globally asymptotically stable.
\end{problem}
\begin{problem}
\label{probl_inst}
With collected data $\{ u^{\tu{m}}(k)\}_{k=0}^{T-1}$, $\{x^{\tu{m}}(k) \}_{k=0}^T$
and with the error sequence satisfying the instantaneous bound $\mathcal{E}_{\tu{i}}$, design a matrix $K$ to solve \eqref{rob_contr_probl_inst} and so ensure that the origin of the feedback interconnection of~\eqref{sys_x_star} and \eqref{u=Kx} is globally asymptotically stable.
\end{problem}

In the next remark, we provide a typical yet conservative way to obtain an energy bound from an instantaneous one.
Still, instantaneous bounds on measurement errors capture actuator or sensor characteristics and are thus more relevant in practice, as discussed in Section~\ref{sec:intro}.

\begin{remark}\label{rem:conversion_to_energy_bound}
Suppose we know that for some $\bar{e}_x \ge 0$ and $\bar{e}_u \ge 0$, any errors $e_x$ and $e_u$ satisfy $|e_x|^2 \le \bar{e}_x$ and $|e_u|^2 \le \bar{e}_u$.
Then, for each $k =0,\dots, T-1$,
\begin{align*}
\left|
\smat{
e_x(k+1)\\
e_x(k)\\
e_u(k)
}
\right|^2 \! \! = \!
|e_x(k+1)|^2 \! + |e_x(k)|^2 \! + |e_u(k)|^2\! \le 2 \bar{e}_x +\bar{e}_u
\end{align*}
and, from~\eqref{E10},
\begin{align*}
E_{10} E_{10}^\top
& 
\preceq 
\sum_{k=0}^{T-1}
\left|
\smat{
e_x(k+1)\\
e_x(k)\\
e_u(k)
}\right|^2 I 
\preceq T (2 \bar{e}_x +\bar{e}_u) I.
\end{align*}
In this way, we can take $\Theta$ as $T (2 \bar{e}_x +\bar{e}_u) I$.
$\hfill\triangleleft$
\end{remark}

\section{Controller design for energy bound}
\label{sec:design}

In this section we solve Problem~\ref{probl}.
To this end, consider the set $\mathcal{C}_{\tu{e}}$ in~\eqref{setC}, i.e.,
\begin{align}
\mathcal{C}_{\tu{e}} & = \{ [A \ B] \colon \label{setC:almost_def}  \\
& X_1^{\tu{m}} - [A \ B] \smat{X_0^{\tu{m}}\\ U_0^{\tu{m}}} = [I \ - \! A \ - \! B] \, E , \ E E^\top \preceq \Theta \}. \notag
\end{align}
Importantly, Proposition~\ref{prop:matrix_elim_lemma} allows rewriting $\mathcal{C}_{\tu{e}}$ equivalently as
\begin{align*}
\mathcal{C}_{\tu{e}} & = \big\{ [A \ B] \colon \\
& (X_1^{\tu{m}} - [A \ B] \smat{X_0^{\tu{m}}\\ U_0^{\tu{m}}} )
\cdot [\star]^\top \preceq [I \ - \! A \ - \! B] \cdot \Theta [\star]^\top \big\} .
\end{align*}

Partition $\Theta = \Theta^\top \succeq 0$ as
\begin{equation}
\label{Theta_partitioned}
\Theta =: \bmat{\Theta_{11} & \Theta_{12}\\ \Theta_{12}^\top & \Theta_{22}}
\end{equation}
with $\Theta_{11} = \Theta_{11}^\top \in \real^{n \times n}$ and $\Theta_{22} = \Theta_{22}^\top \in \real^{(n+m) \times  (n+m)}$.
With~\eqref{Theta_partitioned}, algebraic computations rewrite the set $\mathcal{C}_{\tu{e}}$ as
\begin{subequations}\label{setCwithABC}
\begin{align}
\mathcal{C}_{\tu{e}}  =  \{ Z \! = \! [A \ B] & \colon  Z  \mathscr{A}  Z^\top \! + \! Z \mathscr{B}^\top \! + \! \mathscr{B} Z^\top \! + \! \mathscr{C} \preceq 0\} \label{setC:ABC} \\
\mathscr{A} & := \smat{X_0^{\tu{m}}\\ U_0^{\tu{m}}} \smat{X_0^{\tu{m}}\\ U_0^{\tu{m}}}^\top - \Theta_{22}, \\
\mathscr{B} & :=  -  X_1^{\tu{m}} \smat{X_0^{\tu{m}}\\ U_0^{\tu{m}}}^\top + \Theta_{12},\\
\mathscr{C} & := X_1^{\tu{m}} {X_1^{\tu{m}}}^\top  - \Theta_{11}.
\end{align}
\end{subequations}

To effectively work with $\mathcal{C}_{\tu{e}}$, we make the next assumption on the collected data.
\begin{assumption}
\label{ass:Psi0}
$\smat{X_0^{\tu{m}}\\ U_0^{\tu{m}}} \smat{X_0^{\tu{m}}\\ U_0^{\tu{m}}}^\top \succ \Theta_{22}$.
\end{assumption}

This assumption is of signal-to-noise-ratio type. Indeed, for
\begin{align*}
S_0 := \smat{X_0\\ U_0} := 
\smat{
x(0) & x(1) & \dots & x(T-1)\\
u(0) & u(1) & \dots & u(T-1)
},
\end{align*}%
by \cite[Lemma~10]{otherpaperarxiv}, if $\sigma_{\min}(S_0 S_0^\top)/\sigma_{\max}(\Theta_{22}) > 4$, Assumption~\ref{ass:Psi0} holds, see \cite{otherpaperarxiv} for more details.

Assumption~\ref{ass:Psi0} amounts to requiring $\mathscr{A} \succ 0$. Hence, $\mathscr{A}^{-1}$ exists and we can define
\begin{align}
& \mathscr{Z} := - \mathscr{B} \mathscr{A}^{-1}, \quad \mathscr{Q} := \mathscr{B} \mathscr{A}^{-1} \mathscr{B}^\top - \mathscr{C}. \label{setC:ZQ}
\end{align}
Thanks to Assumption~\ref{ass:Psi0} and $[\As \ \Bs] \in \mathcal{C}_{\tu{e}}$, the set $\mathcal{C}_{\tu{e}}$ has the properties summarized in the next result.
\begin{lemma}
\label{lemma:cons_asmpt_for_set_C}
Under Assumption~\ref{ass:Psi0}, we have that:
\begin{align}
&\hspace*{-6pt} \mathcal{C}_{\tu{e}} = \big\{ Z\colon (Z - \mathscr{Z}) \mathscr{A} (Z - \mathscr{Z})^\top \preceq \mathscr{Q} \big\}, \label{setC_ZAQ}\\
&\hspace*{-6pt} \mathscr{Q} \succeq 0,\\
&\hspace*{-6pt} \mathcal{C}_{\tu{e}} = \big\{ \mathscr{Z} + \mathscr{Q}^{1/2} \Upsilon \mathscr{A}^{-1/2} \colon \Upsilon \in \real^{n\times (n+m)}, \Upsilon \Upsilon^\top \preceq  I_n \big\}, \label{setC_unitBall}
\end{align}
and $\mathcal{C}_{\tu{e}}$ is bounded with respect to any matrix norm.
\end{lemma}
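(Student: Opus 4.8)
The plan is to proceed in three stages: a completion of squares to obtain the ellipsoidal form \eqref{setC_ZAQ}, a one-line argument for $\mathscr{Q}\succeq 0$, and an application of Proposition~\ref{prop:matrix_elim_lemma} to convert the matrix inequality into the explicit parametrization \eqref{setC_unitBall}; boundedness is then immediate.

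First I would start from the quadratic form defining $\mathcal{C}_{\tu{e}}$ in \eqref{setC:ABC} and complete the square around $\mathscr{Z}$. Using $\mathscr{Z} = -\mathscr{B}\mathscr{A}^{-1}$ together with the symmetry of $\mathscr{A}$ (so that $\mathscr{Z}\mathscr{A} = -\mathscr{B}$ and $\mathscr{A}\mathscr{Z}^\top = -\mathscr{B}^\top$), a direct expansion of $(Z - \mathscr{Z})\mathscr{A}(Z-\mathscr{Z})^\top$ yields exactly $Z\mathscr{A}Z^\top + Z\mathscr{B}^\top + \mathscr{B}Z^\top + \mathscr{B}\mathscr{A}^{-1}\mathscr{B}^\top$. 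Hence the defining inequality $Z\mathscr{A}Z^\top + Z\mathscr{B}^\top + \mathscr{B}Z^\top + \mathscr{C} \preceq 0$ is equivalent to $(Z-\mathscr{Z})\mathscr{A}(Z-\mathscr{Z})^\top \preceq \mathscr{B}\mathscr{A}^{-1}\mathscr{B}^\top - \mathscr{C} = \mathscr{Q}$, which is \eqref{setC_ZAQ}; here Assumption~\ref{ass:Psi0} enters only through the existence of $\mathscr{A}^{-1}$. For $\mathscr{Q}\succeq 0$ I would invoke $[\As\ \Bs]\in\mathcal{C}_{\tu{e}}$: substituting $Z = [\As\ \Bs]$ in \eqref{setC_ZAQ} gives $0 \preceq (Z-\mathscr{Z})\mathscr{A}(Z-\mathscr{Z})^\top \preceq \mathscr{Q}$, where the left inequality holds because $\mathscr{A}\succ 0$.

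The central step is the parametrization \eqref{setC_unitBall}. Since $\mathscr{A}\succ 0$, its symmetric square root $\mathscr{A}^{1/2}$ is invertible, so setting $E := (Z-\mathscr{Z})\mathscr{A}^{1/2}$ recasts \eqref{setC_ZAQ} as $E E^\top \preceq \mathscr{Q}$. Writing $\mathscr{Q} = \mathscr{Q}^{1/2} I_n (\mathscr{Q}^{1/2})^\top$, this is precisely inequality \eqref{matrix_elim_lemma_ineq} with $F = \mathscr{Q}^{1/2}$ and $G = I_n$, so Proposition~\ref{prop:matrix_elim_lemma} gives that $E E^\top\preceq\mathscr{Q}$ holds if and only if there exists $\Upsilon$ with $(Z-\mathscr{Z})\mathscr{A}^{1/2} = \mathscr{Q}^{1/2}\Upsilon$ and $\Upsilon\Upsilon^\top\preceq I_n$. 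Right-multiplying the equality by $\mathscr{A}^{-1/2}$ gives $Z = \mathscr{Z} + \mathscr{Q}^{1/2}\Upsilon\mathscr{A}^{-1/2}$, which is \eqref{setC_unitBall}.

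Finally, boundedness follows directly from \eqref{setC_unitBall}: any $Z\in\mathcal{C}_{\tu{e}}$ satisfies $\|Z\| \le \|\mathscr{Z}\| + \|\mathscr{Q}^{1/2}\|\,\|\Upsilon\|\,\|\mathscr{A}^{-1/2}\|$, and $\Upsilon\Upsilon^\top\preceq I_n$ forces $\|\Upsilon\| = \sigma_{\max}(\Upsilon)\le 1$, giving a bound uniform over $\mathcal{C}_{\tu{e}}$; equivalence of matrix norms then yields boundedness with respect to any norm. I expect the only delicate point to be the bookkeeping in applying Proposition~\ref{prop:matrix_elim_lemma}, namely correctly identifying $F = \mathscr{Q}^{1/2}$ and $G = I_n$ and carrying the invertible factor $\mathscr{A}^{1/2}$ through the equivalence, rather than any genuine analytical difficulty; the completion of squares and the boundedness estimate are routine.
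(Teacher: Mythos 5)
Your proof is correct, and the first two steps (the completion of squares giving \eqref{setC_ZAQ}, and $\mathscr{Q}\succeq 0$ via substituting $[\As\ \Bs]\in\mathcal{C}_{\tu{e}}$) coincide with the paper's argument. Where you genuinely diverge is in how \eqref{setC_unitBall} and the boundedness claim are obtained. The paper does not prove these directly: it invokes two external results, namely Proposition~1 and Lemma~2 of \cite{bisoffi2022petersen}, applied to the ellipsoidal form \eqref{setC_ZAQ}. You instead derive the parametrization from the paper's own Proposition~\ref{prop:matrix_elim_lemma}, taking $E=(Z-\mathscr{Z})\mathscr{A}^{1/2}$, $F=\mathscr{Q}^{1/2}$, $G=I_n$ (legitimate, since $\mathscr{Q}\succeq 0$ has already been established so $\mathscr{Q}^{1/2}$ exists, and $\mathscr{A}^{1/2}$ is invertible by Assumption~\ref{ass:Psi0}); the ``if and only if'' in Proposition~\ref{prop:matrix_elim_lemma} gives both set inclusions at once, and right-multiplication by $\mathscr{A}^{-1/2}$ yields \eqref{setC_unitBall}. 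Boundedness then follows from your elementary estimate $\|Z\|\le\|\mathscr{Z}\|+\|\mathscr{Q}^{1/2}\|\,\|\mathscr{A}^{-1/2}\|$ using $\sigma_{\max}(\Upsilon)\le 1$, rather than from the cited external lemma. The paper's route is shorter on the page because it leans on prior work; your route is self-contained within this paper and has the added merit of showing that Proposition~\ref{prop:matrix_elim_lemma} --- which the authors advertise as independently relevant --- already subsumes the external parametrization result they cite, so nothing beyond the paper's own toolbox is needed for this lemma.
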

\begin{proof}
If $\mathscr{A} \succ 0$, $\mathscr{Z}$ and $\mathscr{Q}$ in~\eqref{setC:ZQ} are well-defined and algebraic computations yield \eqref{setC_ZAQ} from~\eqref{setC:ABC}.
By $[\As \ \Bs] \in \mathcal{C}_{\tu{e}}$, we have from~\eqref{setC_ZAQ} that $\mathscr{Q} \succeq ([\As \ \Bs] - \mathscr{Z}) \mathscr{A} ([\As \ \Bs] - \mathscr{Z})^\top \succeq 0$ by $\mathscr{A} \succ 0$.
$\mathscr{A} \succ 0$ and $\mathscr{Q} \succeq 0$ allow applying \cite[Proposition~1]{bisoffi2022petersen} to  obtain \eqref{setC_unitBall} from~\eqref{setC_ZAQ} and applying \cite[Lemma 2]{bisoffi2022petersen} to show with analogous arguments that the nonempty $\mathcal{C}_{\tu{e}}$ is bounded with respect to any matrix norm.
\end{proof}

With Lemma~\ref{lemma:cons_asmpt_for_set_C}, we can design a controller solving Problem~\ref{probl} next.

\begin{theorem}
\label{thm:design}
For data $\{ u^{\tu{m}}(k)\}_{k=0}^{T-1}$, $\{x^{\tu{m}}(k) \}_{k=0}^T$, suppose Assumption~\ref{ass:Psi0} holds.
Feasibility of%
\begin{subequations}
\label{rob_contr_probl_LMI}
\begin{align}
& \text{find} & & Y, P = P^\top \succ 0 \label{rob_contr_probl_LMI:find}\\
& \text{s. t.} & & 
\bmat{
-P -\mathscr{C} & 0 & \mathscr{B}\\
0 & - P & \smat{P & Y^\top} \\
\mathscr{B}^\top & \smat{P\\ Y} & -\mathscr{A}} \prec 0 \label{rob_contr_probl_LMI:ineq}
\end{align}
\end{subequations}
is equivalent to feasibility of~\eqref{rob_contr_probl}.
If \eqref{rob_contr_probl_LMI} is feasible, a controller gain $K$ satisfying \eqref{rob_contr_probl} is $K = Y P^{-1}$.
Moreover, $x= 0$ is globally asymptotically stable for the feedback interconnection of unknown plant $x^+ = A^\star x + B^\star u$ and controller $u = K x$. 
\end{theorem}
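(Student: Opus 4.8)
The plan is to recognize \eqref{rob_contr_probl} as a robust feasibility problem in which one quadratic matrix inequality (QMI) in the decision variable $Z=[A\ B]$ — the closed-loop Lyapunov decrease — must hold throughout a set $\mathcal{C}_{\tu{e}}$ that is itself described by a QMI. Writing $A+BK = Z\smat{I\\K}$ and $\Phi := \smat{I\\K}P\smat{I&K^\top}\succeq0$, the constraint \eqref{rob_contr_probl:ineq} reads $\smat{I&Z}\smat{-P&0\\0&\Phi}\smat{I\\Z^\top}\prec0$, while by \eqref{setC:ABC} membership $Z\in\mathcal{C}_{\tu{e}}$ reads $\smat{I&Z}\smat{\mathscr{C}&\mathscr{B}\\\mathscr{B}^\top&\mathscr{A}}\smat{I\\Z^\top}\preceq0$. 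Thus the whole problem is an instance of ``QMI implies QMI'', and I would attack it with a lossless matrix S-procedure.

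Concretely, I would invoke the matrix S-lemma of \cite{vanwaarde2022Slemma} (equivalently, Petersen's lemma as in \cite{bisoffi2022petersen} through the image representation \eqref{setC_unitBall}). After rewriting $\mathcal{C}_{\tu{e}}$ in the sign-reversed $\succeq 0$ form, the structural hypothesis of the S-lemma — negative definiteness of the trailing block of the constraint matrix — is exactly $-\mathscr{A}\prec0$, which is guaranteed by Assumption~\ref{ass:Psi0}; the required Slater point is supplied by $[\As\ \Bs]\in\mathcal{C}_{\tu{e}}$ together with $\mathscr{A}\succ0$ via \eqref{setC_ZAQ}, and compactness of $\mathcal{C}_{\tu{e}}$ from Lemma~\ref{lemma:cons_asmpt_for_set_C} delivers the strict, lossless form. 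The conclusion is that \eqref{rob_contr_probl} is feasible if and only if there exist $\tau\ge0$, $P\succ0$, $K$ with $\smat{P+\tau\mathscr{C}&\tau\mathscr{B}\\\tau\mathscr{B}^\top&\tau\mathscr{A}-\Phi}\succ0$. This equivalence is precisely the source of the necessary-\emph{and}-sufficient claim, and I expect it to be the \emph{main obstacle}: one must verify the S-lemma hypotheses carefully and confirm that losslessness, not merely sufficiency, holds.

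Next I would remove the multiplier and linearize. At any feasible point $\tau>0$, since $\tau=0$ would force $-\Phi\succ0$, impossible as $\Phi\succeq0$; replacing $P$ by $P/\tau$ (with $K$ unchanged) scales $\Phi$ by $1/\tau$ and normalizes $\tau=1$ while preserving $P\succ0$. Setting $Y:=KP$ then writes $\Phi=\smat{P\\Y}P^{-1}\smat{P&Y^\top}$, so that $\mathscr{A}-\Phi$ is the Schur complement of $P$ in a bordered matrix. Undoing this Schur complement — together with a trivial $\mathrm{diag}(I,-I)$ congruence that flips the sign of the $\mathscr{B}$ block and an overall sign change — turns $\smat{P+\mathscr{C}&\mathscr{B}\\\mathscr{B}^\top&\mathscr{A}-\Phi}\succ0$ into exactly the block LMI \eqref{rob_contr_probl_LMI:ineq}. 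Reading off $K=YP^{-1}$ recovers the controller, and reversing every step gives the converse implication, establishing the equivalence of \eqref{rob_contr_probl_LMI} and \eqref{rob_contr_probl}.

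Finally, for global asymptotic stability of the true plant, I would use that $[\As\ \Bs]\in\mathcal{C}_{\tu{e}}$: feasibility of \eqref{rob_contr_probl} in particular certifies $(\As+\Bs K)P(\As+\Bs K)^\top-P\prec0$. Via the standard equivalence $MPM^\top\prec P\iff M^\top P^{-1}M\prec P^{-1}$ (seen by congruence with $P^{-1/2}$ and $\|N\|=\|N^\top\|$ for $N=P^{-1/2}MP^{1/2}$), the function $V(x)=x^\top P^{-1}x$ is a Lyapunov function for $x^+=(\As+\Bs K)x$, strictly decreasing off the origin, which yields global asymptotic stability of $x=0$.
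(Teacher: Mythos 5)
Your overall strategy coincides with the paper's own proof: a lossless robust-to-multiplier conversion, followed by absorbing the scalar multiplier into $P$ by homogeneity, the substitution $Y=KP$, and Schur complements to arrive at \eqref{rob_contr_probl_LMI:ineq}. Your algebra is correct throughout: the QMI reformulations of \eqref{rob_contr_probl:ineq} and of membership in $\mathcal{C}_{\tu{e}}$, the argument that $\tau=0$ is infeasible, the normalization to $\tau=1$, the identity $\Phi=\smat{P\\Y}P^{-1}\smat{P & Y^\top}$, the Schur-complement/sign-congruence passage to the block LMI (the paper's intermediate inequality \eqref{LMIinZQA} is exactly your $\tau=1$ form up to one more Schur complement), and the Lyapunov argument for global asymptotic stability via $[\As\ \Bs]\in\mathcal{C}_{\tu{e}}$.

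There is, however, one genuine flaw, and it sits exactly where you flagged the main obstacle. You claim the S-lemma's Slater condition ``is supplied by $[\As\ \Bs]\in\mathcal{C}_{\tu{e}}$.'' Membership of the true parameters gives only \emph{nonemptiness} of $\mathcal{C}_{\tu{e}}$, i.e., the nonstrict inequality $([\As\ \Bs]-\mathscr{Z})\mathscr{A}([\As\ \Bs]-\mathscr{Z})^\top\preceq\mathscr{Q}$. A Slater point requires some $\bar Z$ satisfying the constraint \emph{strictly}, i.e., $(\bar Z-\mathscr{Z})\mathscr{A}(\bar Z-\mathscr{Z})^\top\prec\mathscr{Q}$, and since the left-hand side is always positive semidefinite (as $\mathscr{A}\succ0$), such a $\bar Z$ exists if and only if $\mathscr{Q}\succ0$. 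Lemma~\ref{lemma:cons_asmpt_for_set_C} guarantees only $\mathscr{Q}\succeq0$, and singular $\mathscr{Q}$ is not a pathological corner case: with noise-free data ($\Theta=0$, which the paper explicitly includes as a special case), $\mathcal{C}_{\tu{e}}$ is the singleton $\{[\As\ \Bs]\}$ and $\mathscr{Q}=0$, so no Slater point exists and the S-lemma of \cite{vanwaarde2022Slemma} in its Slater-based form is not applicable, yet the theorem must (and does) still hold. The repair is precisely the route you mention only parenthetically and that the paper actually takes: apply Petersen's lemma \cite[Fact~1]{bisoffi2022petersen} to the image representation \eqref{setC_unitBall} of $\mathcal{C}_{\tu{e}}$, which is valid and lossless under exactly $\mathscr{A}\succ0$ and $\mathscr{Q}\succeq0$ (Assumption~\ref{ass:Psi0} plus nonemptiness), with no interior-point requirement. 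A secondary technical point: even where it applies, the strict matrix S-lemma concludes $M-\alpha N\succeq\smat{\beta I & 0\\ 0 & 0}$ with $\beta>0$, not $M-\alpha N\succ0$ as you use; in this setting one can upgrade to your full strict form by slightly increasing $\alpha$ and Schur-complementing against $\mathscr{A}\succ0$, but that step must be stated rather than assumed.
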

\begin{proof}
By $P \succ 0$ and Schur complement, \eqref{rob_contr_probl:ineq} is equivalently
\begin{align}
& \bmat{
- P &  -[A \ B] \smat{I\\ K} P \\
\star & -P
} \prec 0 \quad \forall [A \ B] \in \mathcal{C}_{\tu{e}}. 
\end{align}
There exist $K$ and $P = P^\top \succ 0$ satisfying this condition if and only if there exist $Y$ and $P = P^\top \succ 0$ satisfying
\begin{align*}
& \bmat{
- P & - Z \smat{P\\ Y}  \\
\star & -P
} \prec 0 \quad \forall Z \in \mathcal{C}_{\tu{e}},
\end{align*}
with $Y = K P$.
By $\mathcal{C}_{\tu{e}}$ in~\eqref{setC_unitBall}, obtained under Assumption~\ref{ass:Psi0}, this condition holds if and only if
\begin{align*}
& 0 \succ 
\bmat{
- P & -\mathscr{Z} \smat{P\\ Y} \\
\star & -P
}+
\bmat{ \mathscr{Q}^{1/2}\\
0}
\Upsilon
\bmat{
0 & -\mathscr{A}^{-1/2} \smat{P\\ Y}} \\
& 
+
\bmat{
0\\
-\smat{P\\ Y}^\top \mathscr{A}^{-1/2}}
\Upsilon^\top
\bmat{ \mathscr{Q}^{1/2} & 0} \quad \forall \Upsilon \text{ with } \| \Upsilon \| \le 1.
\end{align*}
By Petersen's lemma reported in~\cite[Fact~1]{bisoffi2022petersen}, this condition holds if and only if there exists $\lambda > 0$ such that
\begin{align*}
& 0 \succ 
\bmat{
- P & -\mathscr{Z} \smat{P\\ Y} \\
\star & -P
} +
\frac{1}{\lambda}
\bmat{ \mathscr{Q}^{1/2}\\
0}
\bmat{ \mathscr{Q}^{1/2}\\
0}^\top \\
& +
\lambda
\bmat{
0\\
-\smat{P\\ Y}^\top \mathscr{A}^{-1/2}}
\bmat{
0\\
-\smat{P\\ Y}^\top \mathscr{A}^{-1/2}}^\top \\
& =
\bmat{
- P + \frac{1}{\lambda} \mathscr{Q} & -\mathscr{Z} \smat{P\\ Y} \\
\star & -P + \lambda \smat{P\\ Y}^\top \mathscr{A}^{-1} \smat{P\\ Y}
}.
\end{align*}
The existence of $Y$, $P = P^\top \succ 0$, $\lambda >0$ such that this inequality holds is equivalent to the existence of $Y$, $P = P^\top \succ 0$ such that
\begin{align}
& 0 \succ 
\bmat{
- P + \mathscr{Q} &  -\mathscr{Z} \smat{P\\ Y} \\
\star & -P + \smat{P\\ Y}^\top \mathscr{A}^{-1} \smat{P\\ Y} 
}. \label{LMIinZQA}
\end{align}
By the definitions of $\mathscr{Q}$ and $\mathscr{Z}$ in~\eqref{setC:ZQ}, this inequality is equivalent to
\begin{align*}
0 & \succ 
\bmat{
- P - \mathscr{C} & 0 \\
\star & -P
}
+
\bmat{
\mathscr{B}\\
\smat{P\\ Y}^\top
}
\mathscr{A}^{-1}
\bmat{
\mathscr{B}\\
\smat{P\\ Y}^\top
}^\top
\end{align*}
and, by Schur complement, to~\eqref{rob_contr_probl_LMI:ineq}.
Since \eqref{rob_contr_probl_LMI} ensures that $A + B K$ is Schur for all $[A \ B] \in \mathcal{C}_{\tu{e}}$ and $[\As \ \Bs] \in \mathcal{C}_{\tu{e}}$, $x= 0$ is globally asymptotically stable for $x^+ = (A^\star + B^\star K) x$.
\end{proof}

Theorem~\ref{thm:design} shows that, since the set $\mathcal{C}_{\tu{e}}$ of parameters consistent with data can be characterized without conservatism, robust stabilization of all matrices in the set $\mathcal{C}_{\tu{e}}$ (with a common Lyapunov function) is equivalent to the linear matrix inequality in \eqref{rob_contr_probl_LMI}.
We also note that \eqref{rob_contr_probl_LMI:ineq} is equivalent, by~\eqref{LMIinZQA}, Assumption~\ref{ass:Psi0} and Schur complement, to
\begin{align*}
& 0 \succ 
\bmat{
- P + \mathscr{Q} & - \mathscr{Z} \smat{P\\ Y} & 0\\
\star & -P & \smat{P & Y^\top}\\
\star & \star & -\mathscr{A}
}.
\end{align*}

\section{Controller design for instantaneous bound}
\label{sec:design_inst}

In this section we solve Problem~\ref{probl_inst}.
To this end, consider the set $\mathcal{C}_{\tu{i}}^k$ in~\eqref{setC_inst} for $k=0$, \dots, $T-1$, i.e.,
\begin{align*}
\mathcal{C}_{\tu{i}}^k & = \big\{ [A \ B] \colon x^{\tu{m}}(k+1) - [A \ B] \smat{x^{\tu{m}}(k)\\ u^{\tu{m}}(k)}  \\
& \hspace*{21mm}
= 
[I \ - \! A \ - \! B] \, \varepsilon, \ \varepsilon \varepsilon^\top \preceq \theta I \big\}.
\end{align*}
Importantly, Proposition~\ref{prop:matrix_elim_lemma} allows rewriting $\mathcal{C}_{\tu{i}}^k$ at $k=0$, \dots, $T-1$ equivalently as
\begin{align*}
& \mathcal{C}_{\tu{i}}^k =  \big\{ [A \ B] \colon \big(x^{\tu{m}}(k+1) - [A \ B] \smat{x^{\tu{m}}(k)\\ u^{\tu{m}}(k)}\big) \cdot [\star]^\top   \\
& \hspace*{10mm} \preceq [I \ - \! A \ - \! B] \cdot (\theta I) [\star]^\top = [I \ A \ B] \cdot (\theta I) [\star]^\top \big\}.
\end{align*}%
Hence, $[A \ B] \in \mathcal{C}_{\tu{i}}$ in~\eqref{setC_cap} if and only if
\begin{align}
&  \text{ for } k = 0, \dots, T-1, \notag\\
&  \smat{I\\ A^\top\\ B^\top}^\top \!\!
\bigg\{ \!\!
\smat{x^{\tu{m}}(k+1)\\ -x^{\tu{m}}(k)\\ -u^{\tu{m}}(k)} \!\!
\cdot [\star]^\top \!\!\!
- \! \smat{
\theta I_n & 0 & 0 \\
0 & \theta I_n & 0 \\
0 & 0 & \theta I_m }
\!\! \bigg\} \!
\smat{I\\ A^\top\\ B^\top} \preceq 0. \label{in_Ci_equiv}
\end{align}
We can design a controller solving Problem~\ref{probl_inst} next.
\begin{theorem}
\label{thm:design_inst}
For data $\{ u^{\tu{m}}(k)\}_{k=0}^{T-1}$, $\{x^{\tu{m}}(k) \}_{k=0}^T$, feasibility~of%
\begin{subequations}%
\label{rob_contr_probl_LMI_inst}
\begin{align}
& \text{find} & & Y, P = P^\top \succ 0, \tau_0 \ge 0, \dots, \tau_{T-1} \ge 0 \label{rob_contr_probl_LMI_inst:find}\\
& \text{s. t.} & & 
\!\! \rule{0pt}{24pt} 0 \succ \smat{
-P & 0 & 0 & 0 \\
0  & P & Y^\top & 0 \\
0 & Y & 0 & Y\\
0 & 0& Y^\top & -P} \label{rob_contr_probl_LMI_inst:ineq}  \\
& & & \!\! - \sum_{k=0}^{T-1} \tau_k \Bigg( 
\smat{x^{\tu{m}}(k+1)\\ -x^{\tu{m}}(k)\\ -u^{\tu{m}}(k)\\ 0}
\smat{x^{\tu{m}}(k+1)\\ -x^{\tu{m}}(k)\\ -u^{\tu{m}}(k)\\ 0}^\top\!\!
- \smat{\theta I_n & 0 & 0 & 0\\
0 & \theta I_n & 0 & 0\\
0 & 0 & \theta I_m & 0\\
0 & 0 & 0 & 0}
\Bigg) \notag 
\end{align}
\end{subequations}
implies feasibility of \eqref{rob_contr_probl_inst}.
If \eqref{rob_contr_probl_LMI_inst} is feasible, a controller gain satisfying \eqref{rob_contr_probl_inst} is $K = Y P^{-1}$.
Moreover, $x= 0$ is globally asymptotically stable for the feedback interconnection of unknown plant $x^+ = A^\star x + B^\star u$ and controller $u = K x$.
\end{theorem}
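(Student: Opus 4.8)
The plan is to mirror the proof of Theorem~\ref{thm:design}, but to replace Petersen's lemma (which is lossless for a single norm-bounded uncertainty) by the S-procedure applied to the $T$ quadratic constraints that define $\mathcal{C}_{\tu{i}}$. Because the S-procedure with more than one constraint is only sufficient, this route yields an implication rather than the equivalence obtained in Theorem~\ref{thm:design}, which is exactly the form of the present statement.

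First I would recast the robust stability requirement~\eqref{rob_contr_probl_inst:ineq} as a single quadratic matrix inequality in $\smat{I\\ A^\top\\ B^\top}$. Writing $G := \smat{I\\ A^\top\\ B^\top}$, so that $G^\top = [I \ A \ B]$, the identity top block gives $A + BK = G^\top \smat{0\\ I\\ K}$ and $P = G^\top \smat{I\\ 0\\ 0} P \smat{I\\ 0\\ 0}^\top G$. Hence~\eqref{rob_contr_probl_inst:ineq} is equivalent to $G^\top \Phi\, G \prec 0$ for all $[A\ B]\in\mathcal{C}_{\tu{i}}$, where, after substituting $Y = KP$,
\begin{equation*}
\Phi = \smat{0\\ I\\ K} P \smat{0\\ I\\ K}^\top - \smat{I\\ 0\\ 0} P \smat{I\\ 0\\ 0}^\top = \smat{-P & 0 & 0\\ 0 & P & Y^\top\\ 0 & Y & YP^{-1}Y^\top}.
\end{equation*}
By~\eqref{in_Ci_equiv}, membership $[A\ B]\in\mathcal{C}_{\tu{i}}$ reads $G^\top M_k G \preceq 0$ for $k=0,\dots,T-1$, with $M_k := \smat{x^{\tu{m}}(k+1)\\ -x^{\tu{m}}(k)\\ -u^{\tu{m}}(k)} \cdot [\star]^\top - \theta I$, and $G$ has full column rank thanks to its identity top block.

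Next I would invoke the sufficiency direction of the S-procedure: if there exist multipliers $\tau_0,\dots,\tau_{T-1}\ge 0$ with $\Phi - \sum_{k} \tau_k M_k \prec 0$, then for every $G$ satisfying the $T$ constraints one obtains $G^\top \Phi G \prec \sum_k \tau_k\, G^\top M_k G \preceq 0$, the strict inequality transferring through the congruence precisely because $G$ has full column rank. This is the desired stability condition for all $[A\ B]\in\mathcal{C}_{\tu{i}}$. The inequality $\Phi - \sum_k \tau_k M_k \prec 0$ still carries the nonlinear term $YP^{-1}Y^\top$ in its $(3,3)$ block; using $P\succ 0$ I would eliminate it by a Schur complement, which appends one $n\times n$ block and reproduces exactly~\eqref{rob_contr_probl_LMI_inst:ineq} (the trailing zero blocks multiplying $\tau_k$ in the rank-one and $\theta I$ terms reflect that the appended coordinate does not enter the original quadratic form in $[A\ B]$, so the Schur step commutes with subtracting $\sum_k\tau_k M_k$). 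Reading this chain backwards gives the claimed implication: feasibility of~\eqref{rob_contr_probl_LMI_inst} yields feasibility of~\eqref{rob_contr_probl_inst} with $K = YP^{-1}$. Finally, since $[\As\ \Bs]\in\mathcal{C}_{\tu{i}}$, the matrix $\As + \Bs K$ is Schur, whence global asymptotic stability of $x=0$ for $x^+ = (\As + \Bs K)x$.

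I expect the genuinely one-directional step, and the conceptual heart of the argument, to be the S-procedure: for $T>1$ constraints it is merely sufficient, which is why Theorem~\ref{thm:design_inst} is an implication rather than an equivalence. The remaining work is bookkeeping: verifying the block identity for $\Phi$, checking that the full-column-rank congruence preserves strictness, and tracking the extra $n\times n$ block introduced by the Schur complement.
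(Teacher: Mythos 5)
Your proposal is correct and follows essentially the same route as the paper's proof: the paper likewise Schur-complements away the $YP^{-1}Y^\top$ block, substitutes $Y=KP$, exploits the full (column) rank of $\smat{I\\ A^\top\\ B^\top}$ to transfer strict negative definiteness through the congruence, and then uses the nonnegative multipliers $\tau_k$ in exactly the (sufficiency-only) S-procedure step you describe, applied to the constraints~\eqref{in_Ci_equiv}. The only difference is presentational — the paper argues forward from the LMI to~\eqref{rob_contr_probl_inst:ineq}, while you derive the chain from the robust condition and read it backwards — and your diagnosis of the S-procedure as the lossy, one-directional step matches the paper's discussion.
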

\begin{proof}
By Schur complement, \eqref{rob_contr_probl_LMI_inst:ineq} is equivalent to
\begin{align*}
& 0 \succ 
\smat{
-P & 0 & 0\\
0  & P & Y^\top\\
0 & Y & Y P^{-1} Y^\top} \\
& - \sum_{k=0}^{T-1} \tau_k \Bigg( 
\smat{x^{\tu{m}}(k+1)\\ -x^{\tu{m}}(k)\\ -u^{\tu{m}}(k)}
\smat{x^{\tu{m}}(k+1)\\ -x^{\tu{m}}(k)\\ -u^{\tu{m}}(k)}^\top
- \smat{
\theta I_n & 0 & 0 \\
0 & \theta I_n & 0 \\
0 & 0 & \theta I_m }
\Bigg).
\end{align*}
By the change of variables $Y = KP$ between $Y$ and $K$, the previous inequality is equivalent to
\begingroup
\thinmuskip=.8mu plus 1mu minus 1mu
\medmuskip=1mu plus 1mu minus 1mu
\thickmuskip=1.2mu plus 1mu minus 1mu
\begin{align}
& 0 \succ M := \label{M_def} \\
& \smat{
-P & 0 & 0\\
0  & P & P K^\top\\
0 & KP & K P K^\top} - \sum_{k=0}^{T-1} \tau_k \bigg( 
\smat{x^{\tu{m}}(k+1)\\ -x^{\tu{m}}(k)\\ -u^{\tu{m}}(k)}
\cdot [\star]^{\top}
- \smat{
\theta I_n & 0 & 0 \\
0 & \theta I_n & 0 \\
0 & 0 & \theta I_m }
\bigg). \notag
\end{align}
\endgroup
Since $[I\ A \ B]$ has full row rank for each $[A \ B]$, the previous inequality implies \cite[Obs.~7.1.8]{horn2013matrix} that
\begin{align*}
& 0 \succ
\bmat{I & A & B }
M
\bmat{I & A & B }^\top \quad \forall [A \ B].
\end{align*}
By $\tau_0 \ge 0$, \dots, $\tau_{T-1} \ge 0$, the previous condition implies that
\begin{align*}
& 0 \succ
\smat{I\\ A^\top\\ B^\top}^\top \!
\smat{
-P & 0 & 0\\
0  & P & P K^\top\\
0 & KP & K P K^\top}\!
\smat{I\\ A^\top\\ B^\top} \, \forall [A \ B] \text{ s. t. \eqref{in_Ci_equiv} holds.}
\end{align*}
This is equivalent to: $0 \succ (A+BK) P (A+BK)^\top -P$ for all $[A \ B] \in \mathcal{C}_{\tu{i}}$, which is \eqref{rob_contr_probl_inst:ineq}.
\end{proof}

Theorem~\ref{thm:design_inst} shows that a sufficient condition for robust stabilization of all matrices in the set $\mathcal{C}_{\tu{i}}$ (with a common Lyapunov function) is feasibility of the linear matrix inequality in~\eqref{rob_contr_probl_LMI_inst}.
Moreover, if \eqref{rob_contr_probl_LMI_inst:ineq} is feasible, we have that for some $\tau_0 \ge 0$, \dots, $\tau_{T-1} \ge 0$,
\begin{align}
\label{nec_cond_inst}
\sum_{k=0}^{T-1} \tau_k \Big(\smat{x^{\tu{m}}(k)\\ u^{\tu{m}}(k)}\smat{x^{\tu{m}}(k)\\ u^{\tu{m}}(k)}^\top-\theta I_{n+m}\Big) \succ 0.
\end{align}
This is true since \eqref{rob_contr_probl_LMI_inst:ineq} is equivalent to~\eqref{M_def}, and \eqref{M_def} implies negative definiteness of the principal submatrix of $M$ with elements $(2,2)$, $(2,3)$, $(3,2)$, $(3,3)$.
\eqref{nec_cond_inst} can be interpreted as a signal-to-noise ratio.

\section{Numerical results}
\label{sec:example}

Data are generated by a system corresponding to the discretization of a simple distillation column borrowed from~\cite{albertos2006multivariable}.
Specifically, the \emph{unknown} matrices $A^\star$ and $B^\star$ are selected equal to \cite[$A$ and $B$ on p.~95]{albertos2006multivariable}.
Such system has $n=7$, $m=3$ and the eigenvalues of the unknown $A^\star$ are $0$, $0$, $0.8607$, $0.8607$, $0.9024$, $0.9024$, $0.9217$.

We assume to know that each errors $e_x$ and $e_u$ satisfy $|e_x|^2 \le \bar{e}_x$ and $|e_u|^2 \le \bar{e}_u$ for some $\bar{e}_x \ge 0$ and $\bar{e}_u \ge 0$.
We consider an experiment with $\bar{e}_x=5\cdot 10^{-5}$, $\bar{e}_u=5\cdot 10^{-5}$, $T=200$.
By converting the instantaneous bounds into an energy bound as in Remark~\ref{rem:conversion_to_energy_bound}, we obtain $\Theta = T(2\bar{e}_x + \bar{e}_u)I_{17}$;
we also obtain $\theta = 2\bar{e}_x + \bar{e}_u$, see Remark~\ref{rem:conversion_to_energy_bound} and \eqref{setE_inst}.
For this $\Theta$ and $\theta$, \eqref{rob_contr_probl_LMI} was not feasible whereas \eqref{rob_contr_probl_LMI_inst} was.
The controller designed with \eqref{rob_contr_probl_LMI_inst} is
\begin{align*}
K \!=\!
\smat{
   -0.2759&    0.1518&    1.2911&   -0.7883&   -0.0065&   -2.8121&    0.8276\\
    0.1886&   -1.1926&  -19.8129&    2.7679&    0.4695&   30.2622&   -9.6385\\
    0.4625&   -0.0028&   -1.3624&    0.4232&   -2.4954&    1.0034&   -0.4562
}\!
\end{align*}
and the resulting closed-loop solutions are in Figure~\ref{fig:cl}.

\begin{figure}
\centerline{\includegraphics[scale=0.6]{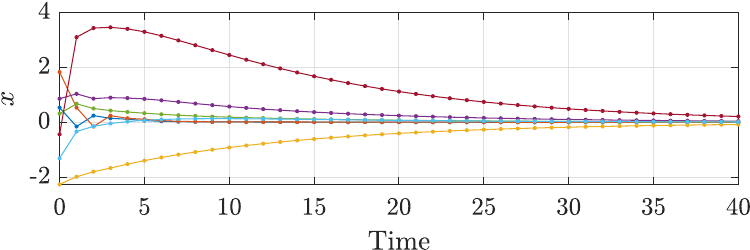}}
\caption{Closed loop of~\eqref{sys_x_star} and $u = K x$, where $K$ is designed by \eqref{rob_contr_probl_LMI_inst}.}
\label{fig:cl}
\end{figure}

We now evaluate the performance of the two previous approaches quantitatively, given the same $\bar{e}_x$ and $\bar{e}_u$.
To do so, we follow the type of analysis in~\cite{bisoffi2021tradeoffs}, to which we refer the reader for more details.
Consider different numbers of data points $T \in \{20, 40, 60, 80, 100, 120, 140, 160, 180, 200\}$ and different bounds $\theta \in \{
10^{-6},\sqrt{10}\cdot10^{-6},10^{-5},\sqrt{10}\cdot10^{-5},$ $10^{-4},\sqrt{10}\cdot10^{-4},10^{-3} \}$. 
Based on these values of $\theta$, we set $\bar{e}_x = \bar{e}_u = \theta/3$ so that for each $e_x$, $|e_x| \le \sqrt{\bar{e}_x} \in \{0.0006,0.0010,0.0018, 0.0032,0.0058,0.0103,0.0183 \}$, and $\Theta = T \theta I$.
For each of these pairs $(T,\theta)$, we randomly generate 20 data sequences; for each data sequence, apply \eqref{rob_contr_probl_LMI} in Theorem~\ref{thm:design} and \eqref{rob_contr_probl_LMI_inst} in Theorem~\ref{thm:design_inst}; then count the number $n_{\tu{feas}}$ of instances when \eqref{rob_contr_probl_LMI} and \eqref{rob_contr_probl_LMI_inst} are feasible.
For both cases, we plot the ratio $n_{\tu{feas}}/20 \in [0,1]$ in Figure~\ref{fig:feasPerc} as a function of $(T,\theta)$, with a logarithmic scale for $\theta$.
In line with~\cite{bisoffi2021tradeoffs}, Figure~\ref{fig:feasPerc} shows that it is more convenient to employ the instantaneous bound directly together with the sufficient condition in Theorem~\ref{thm:design_inst}, rather than convert it into an energy bound to use the necessary and sufficient condition in Theorem~\ref{thm:design}.
The downside of Theorem~\ref{thm:design_inst} over Theorem~\ref{thm:design} is that the former involves $T$ more decision variables, but this is hardly an issue unless $T$ is quite large.

\begin{figure}
\centerline{\includegraphics[scale=0.65]{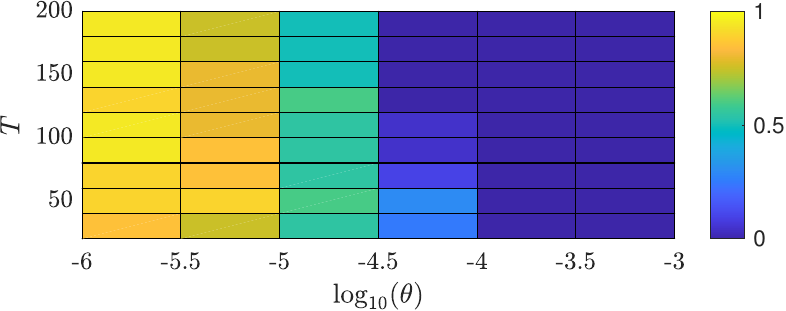}}
\centerline{\includegraphics[scale=0.65]{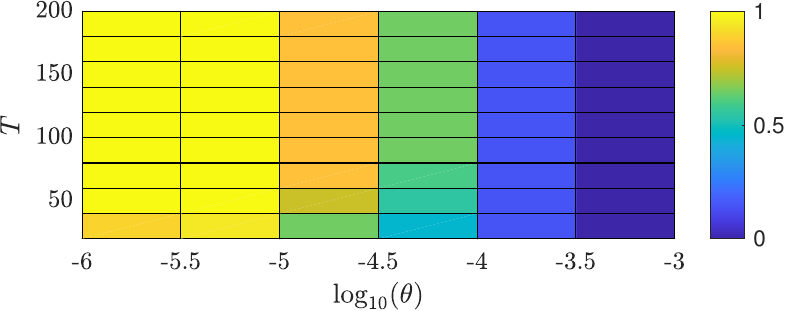}}
\caption{Ratio of $n_{\tu{feas}}/20 \in [0,1]$ as a function of $(T,\log_{10}(\theta))$ for \eqref{rob_contr_probl_LMI} (top) and \eqref{rob_contr_probl_LMI_inst} (bottom), with ``good'' regions in yellow.}
\label{fig:feasPerc}
\end{figure}

\section{Conclusions}
\label{sec:concl}
We have addressed the problem of designing a state-feedback controller based only on noisy data.
Specific to this work is that we have considered the setting where input and state  are affected by measurements errors with both energy and instantaneous bounds.
We have provided two linear matrix inequalities for the design of a controller that asymptotically stabilizes all systems consistent with the data points and the respective energy or instantaneous bounds.
For the energy bound, the linear matrix inequality is actually equivalent to robust stabilization with a common Lyapunov function.
Numerical examples have validated these results and provided a caveat for when energy bounds are derived by converting instantaneous ones.
Interesting directions for future work are extensions to nonlinear systems and control problems beyond stabilization.

\bibliographystyle{IEEEtran}
\bibliography{pubs-state}

\end{document}